\documentclass[conference,letterpaper]{IEEEtran}

\usepackage[utf8]{inputenc} 
\usepackage[T1]{fontenc}
\usepackage{url}
\usepackage{ifthen}
\usepackage{cite}
\usepackage[cmex10]{amsmath} 

\usepackage{comment}
\RequirePackage{amsthm,amssymb,mathrsfs,amsfonts}
\usepackage{enumitem}  
\usepackage{euscript}       
\usepackage{color}
\usepackage{bm,bbm}
\usepackage{soul}
\usepackage{algorithm}
\usepackage{algpseudocode}
\usepackage{array}
\usepackage[caption=false,font=normalsize,labelfont=sf,textfont=sf]{subfig}
\usepackage{textcomp}
\usepackage{stfloats}
\usepackage{verbatim}
\usepackage{graphicx}
\usepackage{tikz}
\usetikzlibrary{positioning,arrows.meta}

\usepackage{balance}

\usepackage{amssymb}

\newtheorem{theorem}{Theorem}

\newtheorem{lemma}{Lemma}

\newtheorem{assumption}{Assumption}

\theoremstyle{remark}
\newtheorem{remark}{Remark}

\DeclareMathOperator*{\essup}{ess\,sup}

\newcommand{\Exp}{\mathsf{E}}
\newcommand{\Pro}{\mathsf{P}}
\newcommand{\WADD}{\mathrm{WADD}}

\newcommand{\cF}{\mathscr{F}}

\newcommand{\I}{\mathsf{
KL}_{M}}

\newcommand{\Tc}{\mathsf{T}}
\newcommand{\e}{\mathrm{e}}
\newcommand{\hth}{\hat{\theta}}

\renewcommand{\L}{\EuScript{L}}

\newcommand{\algname}{AQuTe-CUSUM}

\usepackage[
  letterpaper,
  left=0.64in,
  right=0.64in,
  top=0.76in,
  bottom=1.1in
]{geometry}

\begin{document}
\title{Quickest Change Detection in Parametric Models With 1-Bit Measurements} 


\author{%
  \IEEEauthorblockN{Liyan Xie}
  \IEEEauthorblockA{Department of Industrial and Systems Engineering \\
                    University of Minnesota\\
                   Minneapolis, Minnesota, USA\\
                    Email: liyanxie@umn.edu}
  \and
  \IEEEauthorblockN{Martina Cardone}
  \IEEEauthorblockA{Department of Electrical and Computer Engineering\\ 
  University of Minnesota\\
                   Minneapolis, Minnesota, USA\\
                    Email: mcardone@umn.edu}
}

\maketitle

\begin{abstract}
We consider the quickest change detection from 1-bit quantized observations, where the post-change distribution is a parametric model with unknown parameters and the quantization thresholds are jointly chosen with the detection statistic. We propose an {Adaptive-Quantization-Threshold CUSUM} (\algname{}) algorithm, which estimates the post-change parameter online and adaptively selects the quantization threshold to maximize the induced Kullback-Leibler divergence. Under suitable regularity conditions, we characterize the average run length and worst-case average detection delay of the \algname{} procedure, and show that it is asymptotically optimal in first-order as the average run length goes to infinity.
Finally, we assess the performance of \algname{} for two distributions, namely the Gaussian and Poisson. 
\end{abstract}

\section{Introduction}\label{sec:intro}
Online change-point detection is a fundamental problem in statistics and signal processing~\cite{poor-hadj-QCD-book-2008,Siegmund1985,tartakovsky2014sequential,tutorial_jsait}. The most common version of the problem consists of a sequence of observations sampled independently. There is a change-point such that the underlying distribution changes from one distribution to an alternative. This problem is of major importance in many applications, such as seismic detection~\cite{xie2019asynchronous}, industrial quality control~\cite{shi2009quality}, dynamical systems~\cite{lai1995sequential}, structural health control~\cite{balageas2010structural}, event detection~\cite{li2017detecting}, anomaly detection \cite {chandola2009anomaly}, and detection of attacks~\cite{tartakovsky2014rapid}. The goal of online change-point detection is to detect the occurrence of the change in statistical behavior with a minimal delay while controlling the false alarm rate. 


In systems constrained by power, bandwidth, or hardware, such as large-scale sensor networks or embedded monitoring devices, the observed data is typically a quantized version of the raw samples~\cite{shah2025generalized}. Quantization, in these cases, reduces communication overhead and memory usage, making it well-suited for resource-limited environments. Sequential analysis under quantized observations has been studied in some classical work~\cite{stein2018asymptotic,willett1995performance,blum2002quantization,lee1981sequential,tantaratana1977quantization,nguyen2006optimal}, with most of them focusing on given or fixed quantizers for every sample. In a recent work~\cite{fauss2022optimal}, the authors study the optimal data-adaptive quantizers for the online hypothesis testing problem with the goal of minimizing a weighted summation of expected stopping time and error probabilities. Another related line of work studies online change detection under controlled sensing constraints, where sensing actions are adapted using recent parameter estimates~\cite{veeravalli2024quickest}. While quantization can be viewed as a possible type of sensing action, it also brings features not directly considered in~\cite{veeravalli2024quickest}, such as continuously selected 1-bit thresholds rather than a finite action set, as well as estimation properties specific to 1-bit measurements.

In this work, we study online change-point detection from quantized data, where the quantization is in the form of 1-bit. We assume streaming data from a parametric distribution arrives sequentially at a sensing node, which applies a 1-bit quantizer to each incoming raw observation.
We assume a known pre-change raw data distribution and an unknown post-change data distribution. To improve detection performance, the sensing node is allowed to adapt its quantization threshold over time. Our goal is to minimize the detection delay (i.e., the time needed to detect the change) under a fixed false alarm constraint, by designing a rule that updates both the quantization threshold and the detection statistic over time.

We propose an {Adaptive-Quantization-Threshold CUSUM} (\algname{}) algorithm. The algorithm estimates the post-change distribution parameter in an online manner, using the most recent quantized observations, and then optimizes the next quantization threshold based on this estimate. Given a new quantized observation, we adopt the cumulative sum (CUSUM) algorithm to update the detection statistic efficiently by accumulating the log-likelihood ratios \cite{page-biometrica-1954,lai-ieeetit-1998}. We provide theoretical guarantees in terms of the average run length and worst-case average detection delay, and show that the proposed method is first-order asymptotically optimal.


The paper contributions and outline are as follows:
\begin{enumerate}[leftmargin=*, itemsep=0pt, topsep=0pt, parsep=0pt, partopsep=0pt]
\item Section~\ref{sec:setup} presents the problem setup and preliminaries. In particular, Theorem~\ref{thm:lower} provides a lower bound on the minimal detection delay incurred by using 1-bit measurements.
\item Section~\ref{sec:MainResult} presents the proposed algorithm and the main theoretical results of our work, that is, the average run length in Lemma~\ref{lem:arl} and detection delay in Theorem~\ref{thm:wadd}.
\item Section~\ref{sec:Examples} shows the applicability of our results through two parametric examples on Gaussian and Poisson models. 
\end{enumerate}




\section{Problem Setup and Preliminaries}\label{sec:setup}

Suppose that we have access to a univariate i.i.d. data sequence $\{X_t, t\in \mathbb{N}\}$ with $X_t \in \mathcal{X} \subseteq \mathbb{R}$. We assume that each $X_t$ follows a parametric family distribution with density $p_X(x;\theta)$ parameterized by $\theta\in\Theta \subset \mathbb{R}$, where $\Theta$ denotes the parameter space. 
There are two parameters $\theta_0$ and $\theta_1$, and a deterministic time $\kappa \in \{0,1,2,\ldots\}$ such that
\begin{equation}
X_t  \stackrel{\rm i.i.d.}{\sim}\left\{\begin{array}{ll}
p_X(x;\theta_0) ,&t = 1,2,\ldots,\kappa,\\[2pt]
p_X(x;\theta_1) ,&t =\kappa+1,\kappa+2,\ldots
\end{array}\right.
\label{eq:data_model}
\end{equation}
That is, $\kappa$ is a change-point, where the observations are i.i.d.~before and including $\kappa$ following the density $p_X(x;\theta_0)$, while for times after $\kappa$ they are i.i.d. following $p_X(x;\theta_1)$. We assume that the pre-change distribution parameter $\theta_0$ is known (as it can typically be estimated from abundant nominal data), but $\theta_1$ is {\em unknown}. Moreover, we assume that the post-change parameter $\theta_1$ belongs to a compact set $\Theta_1\subset \Theta$ and $\theta_0 \notin \Theta_1$, so that the detection problem is well-defined. 

Each data point $X_t$ is not observed directly, and instead the measurements are given by a 1-bit  mechanism, namely
\begin{equation}\label{eq:1bit}
B_{t} =
\begin{cases}
1 & \text{if } X_{t} \le \tau_t, \\
-1 & \text{if } X_{t} > \tau_t,
\end{cases}
\end{equation}
where $\tau_t$ is a threshold that may be chosen adaptively (e.g., based on $\{B_k\}_{k=1}^{t-1}$), subject to the constraint that $\tau_t$ lies in the interior of $\mathcal{X}$. Here, we impose the assumption that $\tau_t$ is chosen from a fixed bounded interval $\tau_t\in[\tau_{\min},\tau_{\max}]$ with $\tau_{\min} \in \mathcal{X}$ and $\tau_{\max} \in \mathcal{X}$.  
Thus, for $b\in\{-1,1\}$ and for $X_t\sim p_X(\cdot;\theta)$, we have the distribution of $B_t$ resulting from~\eqref{eq:1bit} as
\begin{equation}
\label{eq:pB}
p_{B} \left (b;\theta,\tau_t \right ) = \Pr \left (B_t=b \right ) = \int_{\mathcal{X}_t(b)} p_X(x;\theta)\,{\mathrm{d}} x,
\end{equation}
where $\mathcal{X}_t(b)$ is a subset of $\mathcal{X}$ determined by $b$; specifically,
$\mathcal{X}_t(1)= \left \{x\in\mathcal{X}:x\le\tau_t \right \}$ and  
$\mathcal{X}_t(-1)= \left \{x\in\mathcal{X}:x>\tau_t \right \}$.

From these observed data samples $\{B_t,t\in\mathbb{N}\}$, we seek to detect the unknown change-point $\kappa$ as quickly as possible. 
We note that, when different (but fixed, non-adaptive) thresholds are used, the $B_t$'s are still independent, but no longer identically distributed. 
%
We consider joint quantization-and-detection procedures. A joint procedure is a pair $(\pi,\Tc)$, where (i) $\pi=\{\tau_t\}_{t\ge1}$ is a quantization policy such that each $\tau_t\in[\tau_{\min},\tau_{\max}]$ is $\cF_{t-1}$-measurable, and (ii) $\Tc$ is a stopping time with respect to the filtration ${\cF_t}$, at which we stop and declare that a change has occurred before $\Tc$. Here, $\cF_t=\sigma\{B_1,\ldots,B_t\}$, with $\cF_0$ denoting the trivial sigma-algebra. In this work, we aim to design a joint strategy that sequentially selects each threshold based on past information and, after observing $B_t$, updates the detection statistic and determines whether to stop.

We denote with $\Pro_\infty^\pi,\Exp_\infty^\pi$ the probability measure and the corresponding expectation when all samples follow the pre-change distribution (i.e.,~$\kappa=\infty$), $\Pro_0^{\theta,\pi},\Exp_0^{\theta,\pi}$ when all data are under the post-change density with parameter $\theta$ (the change happens at 0), and with $\Pro_\kappa^{\theta,\pi},\Exp_\kappa^{\theta,\pi}$ the probability measure and expectation induced when the change happens at time $\kappa$.

We adopt two commonly used performance metrics: (i) the Average Run Length (ARL) {$\Exp^\pi_\infty[{\Tc}]$,} and (ii) Lorden's Worst-case Average Detection Delay (WADD) defined as $\WADD(\Tc)=\sup_{\kappa\geq 0}\,\essup {\Exp_{\kappa}^{\theta,\pi}}[(\Tc-\kappa)^{+}\mid \cF_\kappa]$, where $\essup$ denotes the essential supremum. Ideally, we would like to find  
the optimal joint procedure $(\pi,\Tc)$ that solves the following constrained optimization problem:
\begin{equation*}
\inf_{\Tc,\pi} \, \WADD(\Tc),\quad  \text{subject~to:}~{\Exp^\pi_\infty}[\Tc]\geq\gamma>1,
\label{eq:optCUSUM}
\end{equation*}
i.e., the optimal $\Tc$ has the smallest WADD among all stopping times {and quantization policies with} ARL no smaller than~$\gamma$.

We now provide a lower bound on the WADD for any detection procedure under 1-bit measurements satisfying the ARL constraint. For each threshold value $\tau$ and $\theta\in\Theta_1$, we define the 1-bit Kullback-Leibler (KL) divergence as
\begin{equation}
\label{eq:KL}
\mathsf{KL}(\theta,\tau)
:=\sum_{b\in\{-1,1\}} p_B(b;\theta,\tau)
\log\frac{p_B(b;\theta,\tau)}{p_B(b;\theta_0,\tau)}, 
\end{equation}
which is the KL divergence between the distributions of the quantized 1-bit measurement induced by $p_X(\cdot;\theta)$ and by the pre-change distribution $p_X(\cdot;\theta_0)$. 
\begin{theorem}[Lower Bound~\cite{lai-ieeetit-1998}]\label{thm:lower}
    As $\gamma \rightarrow \infty$, we have 
\begin{equation}
\label{eq:LBT}
\inf_{(\pi,\Tc): \Exp_\infty^\pi[\Tc] \geq \gamma} \WADD(\Tc)  \geq (\I^{-1} + o(1))\log\gamma,
\end{equation}
where {$\I = \max_{\tau\in[\tau_{\min},\tau_{\max}]}\mathsf{KL}(\theta_1,\tau)$}
is the maximum KL divergence between the pre- and post-change distribution of 1-bit measurements $B_t$ over all possible thresholds. 
\end{theorem}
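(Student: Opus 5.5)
The plan is to adapt Lai's (1998) change-of-measure argument to the controlled 1-bit setting. The key fact is that, whatever $\cF_{t-1}$-measurable threshold $\tau_t$ the policy picks, the conditional KL divergence of $B_t$ is at most $\I$. Fix any joint procedure $(\pi,\Tc)$ with $\Exp_\infty^\pi[\Tc]\ge\gamma$. Write the log-likelihood ratio of the observations $B_{\kappa+1},\dots,B_t$ between the change-at-$\kappa$ measure and $\Pro_\infty^\pi$ as $\Lambda_{\kappa,t}=\sum_{k=\kappa+1}^t \ell_k$, where $\ell_k=\log\{p_B(B_k;\theta_1,\tau_k)/p_B(B_k;\theta_0,\tau_k)\}$. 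The quantization policy $\pi$ itself does not enter the likelihood ratio, because $\tau_k$ is a deterministic function of the past bits under both measures. Under $\Pro_\kappa^{\theta_1,\pi}$ the increments satisfy $\Exp[\ell_k\mid\cF_{k-1}]=\mathsf{KL}(\theta_1,\tau_k)\le \I$. Hence $M_t=\Lambda_{\kappa,t}-\sum_{k=\kappa+1}^t\mathsf{KL}(\theta_1,\tau_k)$ is a martingale whose increments have bounded conditional variance. This needs the bits' probabilities to stay bounded away from $0$ and $1$ for $\tau\in[\tau_{\min},\tau_{\max}]$, which follows from compactness of the threshold interval and $\tau$ being interior to $\mathcal{X}$.

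\textbf{Step 1 (a maximal LLN).} I will show that for every $\delta>0$,
\[
\sup_{\kappa}\essup \Pro_\kappa^{\theta_1,\pi}\Bigl(\max_{t\le n}\Lambda_{\kappa,\kappa+t}\ge (1+\delta)\I\, n\,\Big|\,\cF_\kappa\Bigr)\to0
\]
as $n\to\infty$. The domination $\Lambda_{\kappa,\kappa+t}\le M_{\kappa+t}+t\I$ reduces this to showing $\max_{t\le n}|M_{\kappa+t}|/n\to0$ in probability uniformly. That follows from Doob's maximal inequality using the uniform variance bound. This is exactly Lai's condition (his (4)/(5)) with $\I$ playing the role of the information number.

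\textbf{Step 2 (Lai's argument).} Set $m_\gamma=(1-\epsilon)\log\gamma/\I$. First, the ARL constraint gives a window $\kappa$ with $\Pro_\infty^\pi(\Tc\le \kappa+m_\gamma\mid \Tc>\kappa)\le m_\gamma/\gamma$; this follows by a standard pigeonhole over blocks of length $m_\gamma$, and the policy plays no role here. Next, change measure on the event
\[
\{\kappa<\Tc\le\kappa+m_\gamma,\ \Lambda_{\kappa,\Tc}<(1-\epsilon^2)\log\gamma\}.
\]
Its $\Pro_\kappa^{\theta_1,\pi}$ probability is at most $\gamma^{1-\epsilon^2}\cdot m_\gamma/\gamma\to0$. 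On the complementary event, the maximum of $\Lambda$ over $m_\gamma$ steps exceeds $(1-\epsilon^2)\log\gamma=\tfrac{1+\epsilon}{1}\cdot(1-\epsilon)\log\gamma\approx(1+\epsilon)\I m_\gamma$, and Step 1 says this also has vanishing probability. So $\Pro_\kappa^{\theta_1,\pi}(\Tc-\kappa>m_\gamma\mid \Tc>\kappa)\to1$. By Markov's inequality, $\WADD(\Tc)\ge(1-o(1))m_\gamma$. Letting $\epsilon\downarrow0$ gives the bound in the statement. I plan to handle the conditioning on $\cF_\kappa$ exactly as in Lai, namely with $\essup$ over histories and the uniformity obtained in Step 1.

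\textbf{Main obstacle.} The delicate point is Step 1, specifically its uniformity over adaptive policies and over pre-change histories. The thresholds are random and history-dependent, so the increments $\ell_k$ are neither i.i.d. nor stationary. I expect to handle this entirely through the martingale domination $\Lambda\le M+t\I$ rather than a renewal-type argument. The needed uniform second-moment bound on $\ell_k$ I will derive from continuity of $p_B(\pm1;\theta,\tau)$ in $\tau$ on the compact interval $[\tau_{\min},\tau_{\max}]$. Continuity also guarantees that the maximum defining $\I$ is attained.
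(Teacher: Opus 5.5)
Your proposal is correct and takes the same route as the paper. Both verify Lai's maximal-LLN condition uniformly over policies from $\Exp[\ell_k\mid\cF_{k-1}]=\mathsf{KL}(\theta_1,\tau_k)\le\I$, and then invoke Lai's change-of-measure argument. Your Step 1 (the domination $\Lambda_{\kappa,\kappa+t}\le M_{\kappa+t}+t\I$ plus Doob's inequality) supplies the fluctuation control that the paper's one-line verification leaves implicit. The uniform variance bound it needs, however, rests on an extra assumption rather than on compactness alone: $p_B(\pm1;\theta_j,\tau)$, $j=0,1$, must be bounded away from $0$ on $[\tau_{\min},\tau_{\max}]$. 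Continuity in $\tau$ also fails in the Poisson case; there the finitely many attained values rescue the bound instead.
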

\begin{proof}
The proof follows directly from~\cite{lai-ieeetit-1998}, {by noting that the key condition in~\cite{lai-ieeetit-1998} is that, in our notation, for every $\epsilon>0$, $\sup_{\pi}\sup_{\kappa\ge0} \essup \Pro_{\kappa}^{\theta_1,\pi}\big(\max_{1\le m\le n}\sum_{t=\kappa+1}^{\kappa+m}\Lambda_t^\pi>(1\!+\!\epsilon) \, n \, \I \! \mid \!\cF_\kappa\big)\!\to\!0$, where $\Lambda_t^\pi=\log{p_B(B_t;\theta_1,\tau_t)/p_B(B_t;\theta_0,\tau_t)}$. Since $\tau_t$ is $\cF_{t-1}$-measurable, $\Exp_{\kappa}^{\theta_1,\pi}[\Lambda_t^\pi\mid\cF_{t-1}]=\mathsf{KL}(\theta_1,\tau_t)\le\I$ for any $\tau_t$. Thus, the key condition in~\cite{lai-ieeetit-1998} continues to hold and the same lower bound holds over joint procedures $(\pi,\Tc)$.}
\end{proof}
Theorem~\ref{thm:lower} means that for any detection procedure, together with a suitable 
choice of the threshold, with ARL no smaller than $\gamma$, the minimal detection delay is of the order of $\log\gamma/\I$. Therefore, a detection procedure is called {\it first-order asymptotically optimal} if its WADD equals $\frac{\log\gamma}{\I}(1+o(1))$ asymptotically as $\gamma\rightarrow\infty$. In the next section, we propose a detection rule together with an adaptive criterion for selecting the thresholds $\tau_t$'s and prove that it achieves the lower bound in Theorem~\ref{thm:lower} asymptotically as $\gamma \to \infty$.

\section{Proposed Method}\label{sec:MainResult}

\begin{algorithm}[t]
\caption{\algname{}}
\label{alg:ab_cusum}
\begin{algorithmic}[1]
\Require Window size $w$, detection threshold $\nu$.
\State Initialize ${\mathsf{S}}_1=\cdots={\mathsf{S}}_w = 0$, and $\tau_1,\ldots,\tau_w$ randomly.
\State Observe $B_1,\ldots,B_w$ according to $\tau_1,\ldots,\tau_w$. 
\For{$t = w+1, w+2, \ldots$}
    \State Estimate $\hat{\theta}_{t-1}$ via~\eqref{eq:mle_window}
    \State Select quantization threshold $\tau_t$ via~\eqref{eq:threshold_selection}
    \State Observe $B_t$ via the 1-bit mechanism in~\eqref{eq:1bit}
    \State Update statistic ${\mathsf{S}}_t$ via~\eqref{eq:statAWLCUSUM}
    \If{${\mathsf{S}}_t \ge \nu$}
        \State \textbf{Stop} and declare a change at time $t$
    \EndIf
\EndFor
\end{algorithmic}
\end{algorithm}

We {assume that} the post-change parameter $\theta_1$ is unknown. We propose to adopt a sliding window to estimate $\theta_1$ {sequentially}. Fix a window size $w\ge1$, and for $t>w$, let 
\begin{equation}\label{eq:mle_window}
\hth_{t-1}
=\arg\max_{\theta\in {\Theta_1}}\sum_{s=t-w}^{t-1}\log p_{B}(B_s;\theta,\tau_s),
\end{equation}
that is, the maximum likelihood estimate based on the most recent $w$ observations $\{B_s\}_{s=t-w}^{t-1}$, where each $B_s$ is the 1-bit observation from~\eqref{eq:1bit} with threshold $\tau_s$.  Since $\hth_{t-1}$ depends only on past and current observations, it is $\cF_{t-1}$-measurable.

At time $t>w$, we choose the threshold $\tau_t$ as the one that maximizes the 1-bit KL divergence in~\eqref{eq:KL} under the current parameter estimate $\hth_{t-1}$:
\begin{equation}\label{eq:threshold_selection}
\tau_t = \arg\max_{\tau \in [\tau_{\min},\tau_{\max}]} \mathsf{KL} \left (\hth_{t-1},\tau \right ).    
\end{equation}
Thus, $\tau_t$ adapts to the recent data through $\hth_{t-1}$.
Then, we adopt the window-limited CUSUM procedure and update our detection statistic via the following recursion~\cite{xie2023window}:
\begin{equation}
{\mathsf{S}}_t
={\mathsf{S}}_{t-1}^+
+\log\frac{p_B(B_t;\hth_{t-1},\tau_t)}{p_B(B_t;\theta_0,\tau_t)}, t>w;
\,
{\mathsf{S}}_1=\cdots={\mathsf{S}}_w=0.
\label{eq:statAWLCUSUM}
\end{equation}
The corresponding stopping time is
\begin{equation}
{\Tc}
=\inf\{t>w:{\mathsf{S}}_t\ge \nu\},
\label{eq:stAWLCUSUM}
\end{equation}
where $\nu$ is a pre-specified detection threshold. We call the procedure \algname{} and summarize it in Algorithm~\ref{alg:ab_cusum}. 
We now analyze the ARL and the WADD of \algname{}. {To simplify the notation, we suppress the dependence on the quantization policy $\pi$ in~\eqref{eq:threshold_selection} used by \algname{}.}
\begin{lemma}[ARL of \algname{}]\label{lem:arl}
 For threshold $\nu=\log\gamma$, the stopping time ${\Tc}$ satisfies {$\Exp_\infty[{\Tc}] > \gamma.$}
\end{lemma}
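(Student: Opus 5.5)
The plan is the standard martingale argument for window-limited (adaptive) CUSUM, in the spirit of Lorden's ``one-sided SPRT'' reduction. Define the adaptive likelihood ratio increments
\begin{equation*}
\Lambda_t=\frac{p_B(B_t;\hth_{t-1},\tau_t)}{p_B(B_t;\theta_0,\tau_t)},\qquad t>w .
\end{equation*}
Since $\hth_{t-1}$ and $\tau_t$ are both $\cF_{t-1}$-measurable, under $\Pro_\infty$ we have
\begin{equation*}
\Exp_\infty[\Lambda_t\mid\cF_{t-1}]=\sum_{b\in\{-1,1\}}p_B(b;\theta_0,\tau_t)\,\frac{p_B(b;\hth_{t-1},\tau_t)}{p_B(b;\theta_0,\tau_t)}=1 ,
\end{equation*}
so every product $\prod_{t=k+1}^n\Lambda_t$ with $k\ge w$ is a nonnegative $\Pro_\infty$-martingale in $n\ge k$ with mean one. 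This is the key point: adaptivity of both the estimate and the threshold does not destroy the martingale property because they are predictable.

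Next, I will unroll the recursion \eqref{eq:statAWLCUSUM} to write ${\mathsf S}_t=\max_{w\le k<t}\sum_{s=k+1}^{t}\log\Lambda_s$ (up to the harmless effect of the positive part at the starting value ${\mathsf S}_w=0$, which only can make the statistic smaller than this max, hence can only increase $\Tc$). Then $\Tc=\min_{k\ge w}\Tc_k$, where $\Tc_k=\inf\{n>k:\sum_{s=k+1}^n\log\Lambda_s\ge\nu\}$ is a one-sided SPRT started after time $k$. By optional stopping / Ville's maximal inequality for the martingale above, $\Pro_\infty(\Tc_k<\infty\mid\cF_k)\le e^{-\nu}=1/\gamma$ for every $k$.

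Finally, I will invoke Lorden's theorem (the reduction from a sequence of one-sided tests each with false-alarm probability at most $\alpha$ to a detection rule $\min_k\Tc_k$ with $\Exp_\infty[\Tc]\ge1/\alpha$), adapted to the filtration $\cF_t$ and starting at $k=w$; the proof of Lorden's bound only uses $\Pro_\infty(\Tc_k<\infty\mid\cF_k)\le\alpha$, which we have. A direct alternative I would try if Lorden's lemma is awkward to cite in the adaptive setting is the Shiryaev--Roberts comparison: $R_n=\sum_{k=w}^{n-1}\prod_{s=k+1}^n\Lambda_s$ satisfies that $R_n-(n-w)$ is a $\Pro_\infty$-martingale, and $R_{\Tc}\ge e^{{\mathsf S}_{\Tc}}\ge\gamma$, so optional stopping (after truncating $\Tc\wedge N$ and letting $N\to\infty$) gives $\Exp_\infty[\Tc]-w\ge\gamma$, hence $\Exp_\infty[\Tc]>\gamma$, with strictness coming from $w\ge1$. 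I expect the main obstacle to be this optional-stopping step: justifying the limit when $\Exp_\infty[\Tc]$ might be infinite (trivial then) and checking that the overshoot and the positive-part truncation in \eqref{eq:statAWLCUSUM} only help, so that $R_{\Tc}\ge\gamma$ indeed holds on $\{\Tc<\infty\}$.
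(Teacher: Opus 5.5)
Your proposal is correct, and your fallback argument is the paper's proof. Your $R_n$ satisfies $R_n=(R_{n-1}+1)\Lambda_n$ with $R_w=0$, so it coincides with the paper's $\L_n$. The paper then uses the martingale $\L_t-t$, the pathwise bound $\L_t\ge \e^{\mathsf S_t}$, and optional sampling, exactly as you do.

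Your primary route is genuinely different. You write $\Tc=\min_{k\ge w}\Tc_k$ as a minimum of one-sided SPRTs and bound each conditional false-alarm probability by $1/\gamma$ via Ville's inequality. You then re-run Lorden's counting argument rather than citing it. Your claim that this argument needs only $\Pro_\infty(\Tc_k<\infty\mid\cF_k)\le 1/\gamma$ is right. Since $\{\Tc>k\}\in\cF_k$,
\[
\sum_{k\ge w}\Pro_\infty(\Tc>k,\,\Tc_k<\infty)\le \gamma^{-1}\sum_{k\ge w}\Pro_\infty(\Tc>k)=\gamma^{-1}\bigl(\Exp_\infty[\Tc]-w\bigr).
\]
The left side is at least $\Pro_\infty(\Tc<\infty)$, because a minimizing index $k^\ast$ satisfies $\Tc=\Tc_{k^\ast}>k^\ast$. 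So both routes give the same bound $\Exp_\infty[\Tc]\ge \gamma+w$, and strictness comes from $w\ge 1$.

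The two routes trade off as follows:
\begin{itemize}
\item The SR route is shorter: one martingale and one pathwise domination. It relies on the specific structure that the sum over start points dominates the CUSUM maximum.
\item The Lorden route is more modular. It applies to any rule of the form $\min_k\Tc_k$ whose per-start tests have conditional false-alarm probability at most $\alpha$. It also yields the per-start bound as a by-product.
\item The cost of the Lorden route is re-deriving Lorden's lemma for the adaptive, non-i.i.d.\ setting, since it cannot be cited off the shelf.
\end{itemize}

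Two small remarks. First, the identity $\mathsf S_t=\max_{w\le k<t}\sum_{s=k+1}^t\log\Lambda_s$ is exact, so your hedging about the positive part is unnecessary. Second, your truncation-plus-Fatou step, $\Exp_\infty[R_{\Tc}]\le\liminf_N \Exp_\infty[R_{\Tc\wedge N}]=\Exp_\infty[\Tc]-w$ when $\Tc<\infty$ a.s., is what justifies the optional sampling that the paper invokes without comment.
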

\begin{proof}
The proof follows directly from~\cite{xie2023window}. Define the one-step log-likelihood ratio
\begin{equation}\label{eq:ell}
\ell_t = \log  \frac{p_B(B_t;\hth_{t-1},\tau_t)}{p_B(B_t;\theta_0,\tau_t)} ,    
\end{equation}
and the Shiryaev-Roberts-type statistic
\begin{equation}
\label{eq:SRStati}
\L_t=(\L_{t-1}+1)\exp(\ell_t), t>w;\, {\L_1 = \ldots =\L_w=0.}
\end{equation}
Since $\hth_{t-1}$ and $\tau_t$ are $\cF_{t-1}$-measurable, under $\Pro_\infty$ we have
\begin{equation*}
\Exp_\infty[\exp(\ell_t)\mid \cF_{t-1}]
=\!\!\!\!\!\sum_{b\in \{-1,1\}} \!\!\!\!\!p_B(b;\theta_0,\tau_t)
\frac{p_B(b ;\hth_{t-1},\tau_t)}{p_B(b;\theta_0,\tau_t)}=1.
\end{equation*}
Therefore, from~\eqref{eq:SRStati}, we arrive at
\begin{equation*}
\Exp_\infty[\L_t-t \mid \cF_{t-1}]
=\L_{t-1}-(t-1),
\end{equation*}
and so $\{\L_t-t\}$ is a martingale under $\Pro_\infty$.
Also, by exponentiating~\eqref{eq:statAWLCUSUM}, we arrive at
\begin{equation*}
\e^{{\mathsf{S}}_t}
=\max \left \{\e^{{\mathsf{S}}_{t-1}},1 \right \}\exp(\ell_t), t>w;\,
{ \e^{{\mathsf{S}}_1} = \ldots =\e^{{\mathsf{S}}_w}=1,}
\end{equation*}
which, by induction, implies that $\L_t\ge \e^{{\mathsf{S}}_t}$ for all $t>w$.
Note that $\Exp_\infty[\L_w-w]=-w<0$ and hence, by the optional sampling theorem, we have $\Exp_\infty[\L_{{\Tc}}-{\Tc}]<0$, which implies
\begin{equation*}
\Exp_\infty[{\Tc}]
> \Exp_\infty[\L_{{\Tc}}]
\ge \Exp_\infty[\e^{{\mathsf{S}}_{{\Tc}}}]
\ge \e^\nu
=\gamma.
\end{equation*}
This concludes the proof of Lemma~\ref{lem:arl}.
\end{proof}


\begin{lemma}[Worst-Case Delay for \algname{}]\label{lem:worst-case}
For any change-point $\kappa\geq0$ we have that
\begin{align}
\begin{split}
&\mathrm{ess}\sup\Exp_\kappa^\theta[(\Tc-\kappa)^{+}\mid \cF_\kappa] \\
&\leq \sup_{c_1,\ldots,c_w} \Exp_0^\theta[\Tc \mid  \tau_{1}=c_1,\ldots,\tau_{w}=c_w].  
\end{split}
\end{align}
\end{lemma}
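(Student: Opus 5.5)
The plan is to couple the post-change evolution of \algname{} after time $\kappa$ with a fresh run of the procedure started at time $0$, with its initial window thresholds set to suitable values $c_1,\ldots,c_w$. The argument uses the structure of the recursion \eqref{eq:statAWLCUSUM}. First, $\mathsf{S}_t$ depends on the past only through $\mathsf{S}_{t-1}^+\ge 0$. Second, the estimate $\hth_{t-1}$ and the threshold $\tau_t$ depend on the past only through the last $w$ pairs $(B_s,\tau_s)$.

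\textbf{Reduce to a conditional problem.} Fix $\kappa\ge0$ and condition on $\cF_\kappa$. On the event $\{\Tc\le\kappa\}$ the delay $(\Tc-\kappa)^+$ is zero, so it suffices to work on $\{\Tc>\kappa\}$. There, the state at time $\kappa$ consists of $\mathsf{S}_\kappa$ and the window $\{(B_s,\tau_s)\}_{s=\kappa-w+1}^{\kappa}$. After $\kappa$, the observations are i.i.d.\ from $p_X(\cdot;\theta)$. Given $\cF_\kappa$, the future of $(\hth_{t-1},\tau_t,B_t,\mathsf{S}_t)_{t>\kappa}$ is therefore a fixed measurable function of this state and of the post-change raw data $X_{\kappa+1},X_{\kappa+2},\ldots$. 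The case $\kappa<w$ is simpler, since the statistic is still pinned at zero and the thresholds are the (random) initial ones.

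\textbf{Monotonicity in the initial statistic.} Next, compare with a run whose statistic starts from $0$. Replacing $\mathsf{S}_\kappa^+$ by $0$ leaves the thresholds and estimates unchanged, because they do not depend on $\mathsf{S}$. Hence the increments $\ell_t$ are identical on each sample path. By induction on $\max\{x,0\}+\ell$, which is monotone in $x$, the statistic started from $\mathsf{S}_\kappa$ dominates pathwise the statistic started from $0$. So the delay from time $\kappa$ is at most the first passage time over $\nu$ of the zero-started recursion driven by the same window.

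\textbf{Identify with a fresh run, and the main obstacle.} It remains to match this zero-started recursion, driven by the inherited window, with $\Exp_0^\theta[\Tc\mid\tau_1=c_1,\ldots,\tau_w=c_w]$. This is the step I expect to be delicate. A fresh run uses the first $w$ samples $(B_1,\ldots,B_w)$, generated post-change under thresholds $c_1,\ldots,c_w$, to start the statistic. The inherited window, by contrast, contains pre-change $B_s$ that may be adversarially informative about the conditioning. My first attempt is a pathwise bound in which the fresh run is allowed to waste its first $w$ steps, where $\mathsf{S}=0$. This gives an upper bound $w+{}$(the passage time after the window), and since it carries an extra additive $w$ it is exactly of the right form once $w$ is absorbed. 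If this does not match the displayed inequality exactly, I will instead argue as follows. After $\kappa+w$ steps the window consists only of post-change data under thresholds that are $\cF$-measurable, so the conditional law of the future from time $\kappa+w$ on is that of a fresh run started with thresholds $c_i=\tau_{\kappa+i}$. During the first $w$ post-change steps the statistic is nonnegative after the positive part. Taking the supremum over $c_1,\ldots,c_w$ removes the dependence on these inherited thresholds and on $\cF_\kappa$, and then taking the essential supremum and the supremum over $\kappa$ yields the claim.
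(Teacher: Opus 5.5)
Your final argument is the paper's proof. You bound $(\Tc-\kappa)^+\le w+(\Tc-\kappa-w)^+$, use $\mathsf{S}_{\kappa+w}^+\ge0$ and monotonicity of the recursion to dominate $\Tc$ by the stopping time $\Tc_\kappa$ of the procedure restarted with zero statistic at time $\kappa+w+1$, and then identify $\Tc_\kappa-\kappa$, given the inherited thresholds $\tau_{\kappa+1},\ldots,\tau_{\kappa+w}$, with a fresh run conditioned on $\tau_1=c_1,\ldots,\tau_w=c_w$. Your two ``attempts'' are one argument: the additive $w$ is matched exactly, not absorbed, because a fresh run always has $\Tc>w$. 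However, the step you flag as delicate is a genuine gap, and it is the same one the paper's steps (a)--(b) pass over.

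In a fresh run the initial thresholds are set before any data. So given $\tau_{1:w}=c$, the bits $B_1,\ldots,B_w$ are independent with laws $p_B(\cdot;\theta,c_i)$. After the change, $\tau_{\kappa+j}$ for $j\ge2$ is a function of $B_{\kappa+1},\ldots,B_{\kappa+j-1}$ (and of pre-change bits), so conditioning on the inherited thresholds does not give that product law. What your window-Markov observation does give is $\Exp_\kappa^\theta[\Tc_\kappa-\kappa-w\mid\cF_{\kappa+w}]=g(\tau_{\kappa+1:\kappa+w},B_{\kappa+1:\kappa+w})$. Here $g(c,b)$ is the expected passage time of the zero-started recursion run on fresh post-change data from the window $(c,b)$, i.e.\ the fresh-run quantity conditioned on the whole window $\tau_{1:w}=c$, $B_{1:w}=b$.

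Averaging $g$ under the adaptive law lets the thresholds react to bits already seen. For $w=2$, $\tau_{\kappa+2}$ can take a different value for each value of $B_{\kappa+1}$. Such a path-dependent choice is not dominated in general by the best \emph{fixed} $c$ under the product law, so taking $\sup_{c_1,\ldots,c_w}$ does not remove the dependence.

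The clean repair is to take the supremum over the full window:
\[
\essup\Exp_\kappa^\theta[(\Tc-\kappa)^+\mid\cF_\kappa]\le w+\sup_{c,\,b}\,g(c,b).
\]
Your monotonicity and window-Markov observations prove this directly. This weaker form is all the proof of Theorem~\ref{thm:wadd} needs, for two reasons. First, Step~1 there is carried out for arbitrary $\cF_{s-1}$-measurable thresholds and, for $t>2w$, involves only bits after the initial window, so it is unaffected by that window. Second, the $w$ increments at times in $(w,2w]$, which may use a poor estimate, are bounded by (A3) and cost only $O(w)=o(\log\gamma)$.
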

\begin{proof}
Consider first a change at $\kappa=0$. Since we start testing at $w+1$, {we have that $\Tc > w$ and hence,} we can write
\begin{equation*}
\Exp_0^\theta[\Tc]=w+\Exp_0^\theta\Big[\Exp_0^\theta[\Tc-w\mid \cF_w]\Big],
\end{equation*}
{where the expectation $\Exp_0^\theta$ is taken with respect to $\{B_i\}_{i \geq 1}$ and $\{\tau_i\}_{i\geq 1}$.}
Suppose now that the change happens at $\kappa$. Then, for $t>\kappa+w$ it is clear that the test statistic $\mathsf{S}_t$ is larger than the test statistic $\mathsf{S}_{t,\kappa}$ generated by \textit{starting} the detection procedure at time $\kappa+w+1$.  
This suggests that, if we use $\mathsf{S}_{t,\kappa}$ instead of $\mathsf{S}_t$, then we will stop at a time $\Tc_\kappa$ that satisfies $\Tc_\kappa\geq\Tc$. With these observations in mind, we can write
\begin{align*}
&\Exp_\kappa^\theta[(\Tc-\kappa)^{+}\mid \cF_\kappa] 
\leq w+\Exp_\kappa^\theta[(\Tc-\kappa-w)^{+}\mid \cF_\kappa]\\
& \leq w+\Exp_\kappa^\theta[(\Tc_\kappa-\kappa-w)\mid \cF_\kappa]\\
& =\Exp_\kappa^\theta\Big[ \Exp_\kappa^\theta[(\Tc_\kappa-\kappa)\mid \tau_{\kappa+1},\ldots,\tau_{\kappa+w},\cF_\kappa]\mid \cF_\kappa\Big]\\
& \overset{{\rm{(a)}}}{=}\Exp_\kappa^\theta\Big[ \Exp_\kappa^\theta[(\Tc_\kappa-\kappa)\mid \tau_{\kappa+1},\ldots,\tau_{\kappa+w}]\mid \cF_\kappa\Big]\\
&\leq \sup_{c_1,\ldots,c_w} \Exp_\kappa^\theta[{\Tc_\kappa}-\kappa\mid \tau_{\kappa+1}=c_1,\ldots,\tau_{\kappa+w}=c_w]\\
& \overset{{\rm{(b)}}}{=} \sup_{c_1,\ldots,c_w} \Exp_0^\theta[\Tc \mid  \tau_{1}=c_1,\ldots,\tau_{w}=c_w],
\end{align*}
where $\Exp_\kappa^\theta$ is taken with respect to $\{B_i\}_{i > \kappa}$ and $\{\tau_i\}_{i\geq 1}$. 
The equality ${\rm{(a)}}$ is due to the fact that $\Tc_\kappa$ is independent from $\cF_\kappa$ given the thresholds $\{\tau_{\kappa+i}\}_{i=1}^w$ because in $\mathsf{S}_{t,\kappa}$ the data $\cF_\kappa$ are not being used, {and ${\rm{(b)}}$} is due to the definition of $\Tc_\kappa$ and $\Tc=\Tc_0$ when $\kappa=0$ and to the fact that, once the change happens at time $\kappa$, the data from that point onward {evolves as a fresh} process started at time $0$ under the post-change law.
\end{proof}
Lemma~\ref{lem:worst-case} implies that to upper bound the WADD of \algname{}, we only need to consider the special case when $\kappa=0$. 
We first define some functions that we will use frequently in the analysis. Define $q(\theta,\tau):=p_B(1;\theta,\tau)$, and, for $t>w$, $l_s(\theta):=\log p_B(B_s;\theta,\tau_s)$ and $L_{t-1,w}(\theta):=\frac1w\sum_{s=t-w}^{t-1} l_s(\theta)$.
Then,~\eqref{eq:mle_window} can be written as $\hat\theta_{t-1}=\arg\max_{\theta\in\Theta_1} L_{t-1,w}(\theta)$.


\begin{assumption}
{We assume that:}
\begin{enumerate}[label=(A\arabic*),leftmargin=*, labelsep=0.5em] 
\item $\Theta_1\subset\mathbb R$ is compact and $\theta_1$ belongs to the interior of $\Theta_1$.


\item The map $q(\theta,\tau)$ is jointly three-times continuously differentiable in
    $(\theta,\tau)$ on $\Theta_1 \times [\tau_{\min},\tau_{\max}]$, and all first-, second-, and third-order partial derivatives
are uniformly bounded on $\Theta_1\times[\tau_{\min},\tau_{\max}]$.

\item There exists $\delta\in(0,1/2)$ such that $\delta \le q(\theta,\tau)\le 1-\delta$, for any $(\theta,\tau)\in \Theta_1\times[\tau_{\min},\tau_{\max}]$.

\item For any $\tau\in[\tau_{\min},\tau_{\max}]$ and every $\varepsilon>0$, there exists $c_\varepsilon>0$ such that $\inf_{{|u-\theta|}\ge \varepsilon}
\mathsf D(\theta,u;\tau)\ge c_\varepsilon$, where 
\begin{equation}\label{eq:D}
 \mathsf D(\theta,u;\tau)
:=
\sum_{b\in\{-1,1\}}
p_B(b;\theta,\tau)
\log\frac{p_B(b;\theta,\tau)}{p_B(b;u,\tau)}\ge0.   
\end{equation}

\item 
There exists a constant $I_0>0$ such that $\forall \tau\in[\tau_{\min},\tau_{\max}]$
\begin{equation*}
\mathcal I_B(\theta_1,\tau):=
\frac{(\partial_\theta q(\theta_1,\tau))^2}{q(\theta_1,\tau)\{1-q(\theta_1,\tau)\}} \ge I_0.
\end{equation*}
\item For each $\theta$ in a neighborhood of $\theta_1$, the maximizer
    \begin{equation*}
    \tau^\star(\theta)\in \arg\max_{\tau\in[\tau_{\min},\tau_{\max}]}
    \mathsf{KL}(\theta,\tau)
    \end{equation*}
    is unique and interior, and $\partial_{\tau\tau}^2 \mathsf{KL}(\theta_1,\tau^\star(\theta_1)) < 0$.
\end{enumerate}    
\end{assumption}

{(A1)-(A4) are standard 
conditions ensuring consistency of the 1-bit MLE~\cite[Thm 5.7]{vandervaart1998}. (A5) ensures an $O(w^{-1})$ mean-square estimation rate, while (A6)} ensures stability of the KL-optimal quantization threshold.
These conditions hold, for example, for regular one-parameter location families with a positive smooth density when the parameter and threshold sets are compact; the Gaussian location model in Section~\ref{sec:Gaussian} is one such example. For the Poisson model in Section~\ref{sec:Poisson}, when parameter estimation is restricted to a compact interval, (A3)-(A5) are satisfied. Although (A2) and (A6) do not apply directly because the threshold set is discrete, the KL-optimal cutoff is unique at $\theta_1$ and remains optimal when the parameter estimate is sufficiently close to $\theta_1$, so the same first-order asymptotic conclusion holds for the finite threshold set used here. Developing a unified analysis that accommodates both 
continuous and discrete threshold sets is left for future work.

\begin{theorem}[WADD of \algname{}]\label{thm:wadd}
Suppose Assumptions (A1)--{(A6)} hold. Assume threshold $\nu=\log\gamma$, window size $w=o(\log\gamma)$ but $w\to\infty$ as $\gamma\to\infty$. We have the following upper bound for the WADD of \algname{}:
\begin{equation}\label{eq:th1}
\begin{aligned}
\WADD(\Tc) \leq\frac{\log\gamma}{\I}(1+o(1)), 
\end{aligned}
\end{equation}
where ${\I} = \max_{\tau \in [\tau_{\min},\tau_{\max}]} \mathsf{KL}(\theta_1,\tau)$. Thus, the \algname{} procedure is {first-order} asymptotically optimal. 
\end{theorem}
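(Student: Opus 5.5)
By Lemma~\ref{lem:worst-case}, it suffices to bound $\sup_{c_1,\ldots,c_w}\Exp_0^{\theta_1}[\Tc\mid \tau_1=c_1,\ldots,\tau_w=c_w]$, i.e., the case $\kappa=0$ with arbitrary initial thresholds. Since $w=o(\log\gamma)$, the first $w$ samples only cost $o(\log\gamma)$. The plan is to write $\mathsf S_t\ge \sum_{s=w+1}^{t}\ell_s$ and decompose each increment as
\begin{equation*}
\ell_s = \underbrace{\ell_s-\Exp_0^{\theta_1}[\ell_s\mid\cF_{s-1}]}_{=:M_s-M_{s-1}} + \underbrace{\Exp_0^{\theta_1}[\ell_s\mid\cF_{s-1}]}_{=:\mu_s}.
\end{equation*}
Here $\mu_s=\mathsf{KL}(\theta_1,\tau_s)-\mathsf D(\theta_1,\hth_{s-1};\tau_s)$, because $\tau_s$ and $\hth_{s-1}$ are $\cF_{s-1}$-measurable. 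By (A3), all $\ell_s$ are bounded by $\log((1-\delta)/\delta)$, so $M_t$ is a martingale with bounded increments. The goal is to show that, for every $\epsilon>0$, $\Pro_0^{\theta_1}\big(\sum_{s=w+1}^{w+n}\ell_s<(1-\epsilon)n\I\big)$ is small uniformly in the initial thresholds, with enough summability to control the tail of $\Tc$.

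The key step is controlling the drift deficit $\I-\mu_s=[\I-\mathsf{KL}(\theta_1,\tau^\star(\hth_{s-1}))]+\mathsf D(\theta_1,\hth_{s-1};\tau_s)$. By (A2) and (A3), both terms are smooth in $\hth_{s-1}$. Both vanish at $\hth_{s-1}=\theta_1$ with zero first derivative: the first because $\tau^\star(\theta_1)$ is an interior maximizer, with $\tau^\star$ locally $C^1$ by (A6) and the implicit function theorem; the second because $\mathsf D$ is minimized at $u=\theta$. Hence, on a neighborhood of $\theta_1$, $\I-\mu_s\le C(\hth_{s-1}-\theta_1)^2$, and globally the deficit is bounded by a constant. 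I will then show $\Exp_0^{\theta_1}[(\hth_{s-1}-\theta_1)^2]=O(1/w)$ uniformly in $s>2w$ and in the thresholds, i.e., for any adaptive threshold sequence. This comes from a windowed MLE analysis: uniform consistency from (A1)--(A4) via a uniform law of large numbers for $L_{s-1,w}$ (the summands are conditionally independent given thresholds, and martingale LLN or Azuma arguments handle adaptivity), then a Taylor expansion of the score, with the curvature bounded below by (A5) and third derivatives bounded by (A2). Averaging gives $\Exp\big[\sum_{s}(\I-\mu_s)\big]\le n\,O(1/w)+(\text{startup window})$, which is $o(n)$ since $w\to\infty$. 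For $s\le 2w$, the windows may contain samples with arbitrary initial thresholds; this affects only $O(w)=o(\log\gamma)$ steps with bounded deficit.

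Next comes the standard renewal/CUSUM argument. With $n_\gamma=(1+\epsilon)\log\gamma/\I$, Azuma's inequality gives $\Pro(M_{w+n}<-\epsilon n)\le \e^{-cn}$. Markov's inequality on the nonnegative deficit sum gives $\Pro\big(\sum(\I-\mu_s)>\epsilon n\big)\le O(1/(\epsilon w))\to0$. Hence $\Pro_0^{\theta_1}(\Tc>w+n_\gamma)\to0$ uniformly. To convert this into an expectation bound, partition time into consecutive blocks of length $n_\gamma$. Since $\mathsf S_t\ge$ the CUSUM restarted at each block start, and the above bound holds conditionally on any past, again uniformly by the Lemma~\ref{lem:worst-case} style reasoning, I get $\Pro(\Tc>w+k(n_\gamma+2w))\le \beta_\gamma^k$ with $\beta_\gamma\to0$. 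Here the extra $2w$ per block is a burn-in that lets the estimator refresh, and the block count is $k$. Then $\Exp[\Tc]\le w+(n_\gamma+2w)/(1-\beta_\gamma)=(1+\epsilon+o(1))\log\gamma/\I$. Letting $\epsilon\to0$ and combining with Lemma~\ref{lem:arl} and Theorem~\ref{thm:lower} gives first-order optimality.

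The main obstacle is the uniform $O(1/w)$ mean-square bound for the windowed MLE under adaptively chosen, data-dependent thresholds. The observations in a window are not i.i.d. and the thresholds depend on earlier estimates, so classical MLE theory does not apply directly. I would first try conditioning on the thresholds sequentially: the score $\sum_s \partial_\theta l_s(\theta_1)$ is a martingale with bounded increments, giving an $O(w)$ second moment. The observed information is then $\ge wI_0/2$ with probability $1-\e^{-cw}$ by (A5) plus concentration. On the complement, and on the event where consistency fails (exponentially small by (A4) and Hoeffding applied to a finite net over $\Theta_1$), compactness of $\Theta_1$ bounds the error by a constant. Only Markov-level control is needed, so a crude version should suffice.
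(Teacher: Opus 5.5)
Your proof is correct, and its core matches the paper's. Both use the reduction of Lemma~\ref{lem:worst-case}, the comparison $\mathsf S_t\ge\sum_{s=w+1}^t\ell_s$, and the identity $\Exp_0^{\theta_1}[\ell_s\mid\cF_{s-1}]=\mathsf{KL}(\theta_1,\tau_s)-\mathsf D(\theta_1,\hth_{s-1};\tau_s)$. Both bound the two deficit terms quadratically in $\hth_{s-1}-\theta_1$ using (A2)--(A3) and (A6), and combine this with an $O(1/w)$ mean-square bound for the windowed MLE.

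You differ in how the drift bound becomes a delay bound. The paper checks conditions on $\Exp_0^{\theta_1}[\ell_t\mid\cF_{t-w-1}]$ and $\Exp_0^{\theta_1}[\ell_t^2\mid\cF_{t-w-1}]$ and then cites Proposition~1 of \cite{veeravalli2024quickest} as a black box. You do the renewal step by hand: Azuma for the martingale part, Markov for the nonnegative deficit sum, then geometric trials over blocks of length $n_\gamma+2w$, with a burn-in so the estimation window refreshes.

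The paper's route is shorter and modular. Yours is self-contained, and it forces you to prove the MSE bound uniformly over the conditioning history and over adaptive threshold sequences, with exponentially small bad events (Azuma on a net, then the score/curvature expansion). That uniform version is exactly what the paper uses without proof in step (c) of \eqref{eq:expansion2}. There the expectation is conditional on $\cF_{t-w-1}$, while Step~1 only argues convergence in probability and states an unconditional bound, so your treatment closes a soft spot in the paper.

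A side benefit: your block argument already works conditionally on an arbitrary past. It therefore bounds $\Exp_\kappa^{\theta_1}[(\Tc-\kappa)^+\mid\cF_\kappa]$ directly for every $\kappa$, without the conditioning-on-thresholds step in Lemma~\ref{lem:worst-case}. Restricting the MSE bound to $s>2w$ is harmless but unnecessary, since (A5) holds for every $\tau$, including the arbitrary initial thresholds.

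One caveat you share with the paper: bounding $\ell_s$ by $\log((1-\delta)/\delta)$, and differentiating $\mathsf{KL}$ in $\tau$, both need $q(\theta_0,\cdot)$ to be smooth and bounded away from $0$ and $1$ on $[\tau_{\min},\tau_{\max}]$. (A2)--(A3) do not literally give this, since $\theta_0\notin\Theta_1$.
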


\begin{proof} 
From Lemma~\ref{lem:worst-case}, we have that the worst-case delay is attained at $\kappa=0$, i.e., we only need to upper bound $\Exp_0^\theta[\Tc \mid \tau_{1}=c_1,\ldots,\tau_{w}=c_w]$ for any $c_1,\ldots,c_w$. Therefore, we fix $c_1,\ldots,c_w$ and work under the probability measure 
$\Pro_0^{\theta_1}(\,\cdot\mid \tau_1=c_1,\ldots,\tau_w=c_w)$, and for simplicity we omit the conditioning on $c_1,\ldots,c_w$ in the remaining of this proof.

\vspace{0.05in}
\noindent {\it Step 1. Consistency of $\hth_{t-1}$.}
We first show that, as $w\to\infty$, $\hth_{t-1}$ is a consistent estimate of $\theta_1$ and $\Exp_0^{\theta_1}[(\hat\theta_{t-1}-\theta_1)^2]
\le C/w$ for some positive constant $C$.
Conditioning on $\cF_{s-1}$, the threshold $\tau_s$ is deterministic, hence for any fixed $u\in\Theta_1$, 
\begin{equation*}
\Exp_0^{\theta_1}\big[l_s(u)-l_s(\theta_1) \mid \cF_{s-1}\big]
=
-\mathsf D(\theta_1,u;\tau_s) \le 0,
\end{equation*}
where $\mathsf D$ is defined in~\eqref{eq:D}. 
From Assumption~(A3), $|l_s(u)-l_s(\theta_1)|$ is bounded, thus by the weak law of large numbers for bounded martingale differences (see, e.g.,~\cite[Ch.~2]{hallheyde1980}),
\begin{equation*}
L_{t-1,w}(u)-L_{t-1,w}(\theta_1)
\xrightarrow{\Pro_0^{\theta_1}}  
-\frac1w\sum_{s=t-w}^{t-1}\mathsf D(\theta_1,u;\tau_s).
\end{equation*}
Now, by Assumption~(A4), for any $\varepsilon>0$, there exists $c_\varepsilon>0$ such that for any $u$ satisfying
$|u-\theta_1|\ge \varepsilon$, $L_{t-1,w}(u)-L_{t-1,w}(\theta_1)$ is negative with probability tending to one.
Since $\hat\theta_{t-1}$ maximizes $L_{t-1,w}(u)$ over
$\Theta_1$, it follows that $\hat\theta_{t-1}$ is a consistent estimate of $\theta_1$.
Moreover, by a Taylor series expansion on the function $L_{t-1,w}$, we have 
\begin{equation*}
0=\sum_{s=t-w}^{t-1}\partial_\theta l_s(\theta_1) + \left (\sum_{s=t-w}^{t-1}\partial_{\theta\theta}^2 l_s(\tilde\theta_{t-1} )\right )  (\hat\theta_{t-1}-\theta_1 ) ,
\end{equation*}
for some $\tilde\theta_{t-1}$ between $\hat\theta_{t-1}$ and $\theta_1$. {By (A5), we have that $-\frac{1}{w}\sum_{s=t-w}^{t-1}\partial_{\theta\theta}^2 l_s(\tilde\theta_{t-1})$ is lower bounded by a positive constant with probability tending to one and, for all sufficiently large~$w$,}
\begin{equation}
\label{eq:BoundExpDiffTheSquare}
\Exp_0^{\theta_1}\big[(\hat\theta_{t-1}-\theta_1)^2\big]\le \frac{C}{w}.
\end{equation}

\vspace{0.05in}
\noindent {\it Step 2. Delay Upper Bound.}
For $\ell_t$ defined in~\eqref{eq:ell}, we let $U_t:=\sum_{s=w+1}^t \ell_s$ for $t>w$, {$U_t=0$ for $t \leq w$,} and define
\begin{equation*}
\Tc^\circ:=\inf\{t>w:U_t\ge \nu\}.
\end{equation*}
Since $\mathsf S_t\ge U_t$, we have $\Tc\le \Tc^\circ$. Therefore, it suffices to prove that the upper bound in~\eqref{eq:th1} holds for $\Exp_0^{\theta_1}[\Tc^\circ]$. 

Since $\{\ell_t,t\ge1\}$ is a sequence of random variables adapted to $\{\mathcal{F}_{t},t\ge 1\}$, by Proposition~1 in~\cite{veeravalli2024quickest}, to prove that~\eqref{eq:th1} is an upper bound on $\Exp_0^{\theta_1}[\Tc^\circ]$, it suffices to prove~that:

{
\noindent $\bullet$ The lower bound of $\Exp_0^{\theta_1}\left[\ell_t|\mathcal F_{t-w-1}\right]$ converges to $\I$ almost surely. We start by noting that}
\begin{equation}\label{eq:expansion}
\begin{aligned}
\Exp_0^{\theta_1}\left[\ell_t|\mathcal F_{t-1}\right] &=
\!\!\sum_{b \in \{-1,1\}}p_B(b;\theta_1,\tau_t)
\log\frac{p_B(b;\hat\theta_{t-1},\tau_t)}{p_B(b;\theta_0,\tau_t)}\\
& = \mathsf{KL}(\theta_1,\tau_t)-\mathsf{D}(\theta_1,\hat\theta_{t-1};\tau_t).
\end{aligned}
\end{equation}
By (A2)-(A3), the KL function is  uniformly bounded; hence, by a second-order Taylor expansion in $\mathsf{D}(\theta_1,\hat\theta_{t-1};\tau_t)$, together with the uniform boundedness of the derivatives and the fact that the first-order term vanishes at \(\theta_1\), we obtain
\begin{equation}\label{eq:part1}
\left|
\mathsf{D}(\theta_1,\hat\theta_{t-1};\tau_t)
\right|
\le
C(\hat\theta_{t-1}-\theta_1)^2,    
\end{equation}
for some positive constant $C$.
Also, by (A6), $\tau_t=\tau^\star(\hat\theta_{t-1})$, and the argmax map
$\theta\mapsto \tau^\star(\theta)$ is locally Lipschitz in a compact neighborhood of $\theta_1$.
Hence,
\begin{equation*}
|\tau_t-\tau^\star|\le C_2|\hat\theta_{t-1}-\theta_1|,
\end{equation*}
for some positive constant $C_2$. Since $\tau^\star$ is an interior maximizer of $\tau\mapsto \mathsf{KL}(\theta_1,\tau)$, we have
$\partial_\tau \mathsf{KL}(\theta_1,\tau^\star)=0$, and a second-order Taylor expansion yields
\begin{equation}\label{eq:part2}
\mathsf{KL}(\theta_1,\tau^\star)-\mathsf{KL}(\theta_1,\tau_t)
\le C_3|\tau_t-\tau^\star|^2
\le C_4(\hat\theta_{t-1}-\theta_1)^2, 
\end{equation}
for some positive constants $C_3$ and $C_4$.
Then, we obtain
\begin{equation}\label{eq:expansion2}
\begin{aligned}
&\Exp_0^{\theta_1}\left[\ell_t|\mathcal F_{t-w-1}\right] = \Exp_0^{\theta_1}\left[ \Exp_0^{\theta_1}[\ell_t|\mathcal F_{t-1}] \mid \mathcal F_{t-w-1}\right] \\
& {\overset{{\rm{(a)}}}{=}}  \Exp_0^{\theta_1} \left[\mathsf{KL}(\theta_1,\tau_t)-\mathsf{D}(\theta_1,\hat\theta_{t-1};\tau_t) \mid \mathcal F_{t-w-1}\right] \\
& \overset{{\rm{(b)}}}{\ge }\mathsf{KL}(\theta_1,\tau^\star) - C \Exp_0^{\theta_1} \left[ (\hat\theta_{t-1}-\theta_1)^2 |\mathcal F_{t-w-1}\right]\\
& \overset{{\rm{(c)}}}{\ge } \I - O\left(\frac{1}{w}\right),
\end{aligned}
\end{equation} 
where the labeled (in)equalities follow from: $\rm{(a)}$ using~\eqref{eq:expansion}; ${\rm{(b)}}$ using~\eqref{eq:part1} and~\eqref{eq:part2}; and ${\rm{(c)}}$ using~\eqref{eq:BoundExpDiffTheSquare}. Therefore, the lower bound of $\Exp_0^{\theta_1}\left[\ell_t|\mathcal F_{t-w-1}\right]$ converges to $\I$ as $w\to\infty$. 

\noindent $\bullet$ The upper bound of $\Exp_0^{\theta_1}\left[\ell_t|\mathcal F_{t-w-1}\right]$ is $\I$. This follows directly from~\eqref{eq:expansion} and~\eqref{eq:expansion2}.

\noindent $\bullet$ The upper bound of $\Exp_0^{\theta_1}\left[\ell^2_t|\mathcal F_{t-w-1}\right]$ is finite almost surely. {This follows directly from (A3).}
\end{proof}

\begin{remark}
{Consistency of the estimator $\hth_{t-1}$ is critical to the proof of 
Theorem~\ref{thm:wadd}. This requirement may not hold for some cases; e.g.,} if $X_t\sim\mathcal N(0,\sigma^2)$ and {$\tau_t = 0$,} then {$\Pr(B_t=1)=1/2$} for every $\sigma$ {((A4) and (A5) are violated),} so no estimator based only on the resulting bits can consistently estimate $\sigma$. In practice, $\sigma$ can still be consistently estimated if informative nonzero thresholds are used sufficiently often.
\end{remark}

\begin{remark}
{When $\theta_1$ is a vector
we may} need to use multiple $\tau$-values to guarantee that the Fisher information matrix $\mathcal I_B(\theta_1,\tau)$ is positive definite. For instance, consider $X\sim \mathcal{N}(\mu,\sigma^2)$. If we only use a fixed constant $\tau$, then observing $\{B_t,t\in\mathbb{N}\}$ is not sufficient to estimate both $\mu$ and $\sigma$. Thus, for vector values of $\theta_1$, we 
need to have multiple distinct $\tau$ values 
to ensure that the true 
parameter can be identified. 
\end{remark}


\section{Examples} \label{sec:Examples}

{In this section, we assess the numerical performance of \algname{} for two common distributions.}

\subsection{Gaussian Distribution {$X_t \sim \mathcal N(\theta,1)$}}\label{exp:normal}
\label{sec:Gaussian}
Suppose that before {the change $X_t\sim\mathcal N(\theta_0,1)$,} while after the change $X_t\sim \mathcal N(\theta_1,1)$. 
{From~\eqref{eq:pB}, for any 
$\tau$,} we have $p_B(1;\theta,\tau)=\Pr(X_t\le \tau)=\Phi(\tau-\theta)$, where $\Phi(\cdot)$ is the standard normal cumulative distribution function. Hence, from~\eqref{eq:KL}, the 1-bit KL divergence~is
\begin{align}
\begin{split}
\mathsf{KL}(\theta_1,\tau)
&= \Phi(\tau-\theta_1)
\log\frac{\Phi(\tau-\theta_1)}{\Phi(\tau-\theta_0)}
\\
&\quad +
\bigl(1-\Phi(\tau-\theta_1)\bigr)
\log\frac{1-\Phi(\tau-\theta_1)}{1-\Phi(\tau-\theta_0)}.
\end{split}
\label{eq:KLGaussian}
\end{align}

\begin{figure}[t]
    \centering
    \begin{tabular}{cc}
    \includegraphics[width=0.46\linewidth]{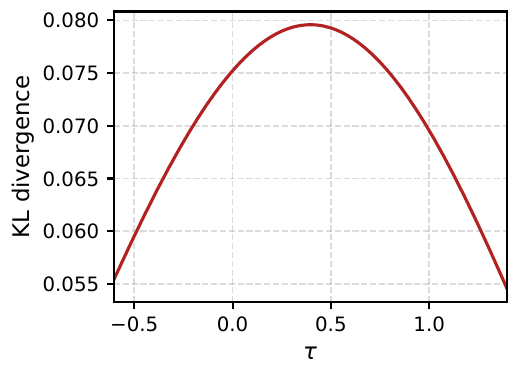}  & \includegraphics[width=0.46\linewidth]{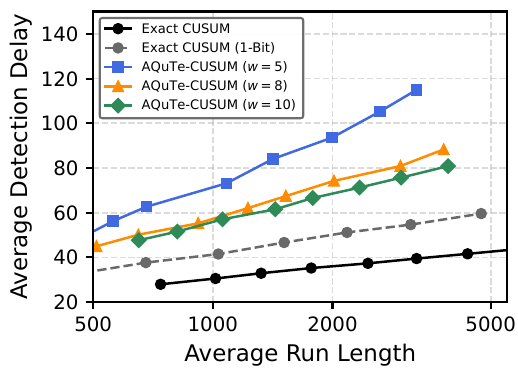}    
       \end{tabular}
       \vspace{-0.2in}
    \caption{$\mathcal{N}(\theta,1)$, with $\theta_0=0$, $\theta_1=0.5$, and $\Theta_1 = [0.25, \infty)$. 
    }
    \label{fig:guassian}
    \vspace{-0.10in}
\end{figure}

{The optimal $\tau^\star$ is the value that maximizes~\eqref{eq:KLGaussian}.} In general, $\tau^\star$ does not admit a simple closed-form expression. In practice, however,
$\mathsf{KL}(\theta_1,\tau)$ is a smooth one-dimensional function of $\tau$, so $\tau^\star$ can be computed efficiently numerically. 
{Fig.~\ref{fig:guassian} (left) illustrates $\mathsf{KL}(\theta_1,\tau)$ versus $\tau$;}  Fig.~\ref{fig:guassian} (right) shows that the delay of our method {has similar slope as} the information-theoretical lower bound obtained by the 1-bit exact CUSUM test, {especially for a moderate window size such as $w=10$.}


\subsection{Poisson Distribution {$X_t\sim \mathsf{Poi}(\theta)$}}\label{exp:poisson}
\label{sec:Poisson}
Suppose that before the change $X_t\sim\mathsf{Poi}(\theta_0)$ and after the change $X_t\sim\mathsf{Poi}(\theta_1)$ with $\theta_1\neq \theta_0$. 
Since the support is discrete, the threshold $\tau$  induces the event $\{X_t\le \tau\}$ to be equivalent to $\{X_t\le \lfloor\tau\rfloor\}$. Defining $k:=\lfloor\tau\rfloor$, we have that
{$p_B(1;\theta,\tau)
=
\sum_{x=0}^{k} {\rm{e}}^{-\theta}\frac{\theta^x}{x!}.$}
\begin{figure}[t]
    \centering
    \begin{tabular}{cc}
    \includegraphics[width=0.46\linewidth]{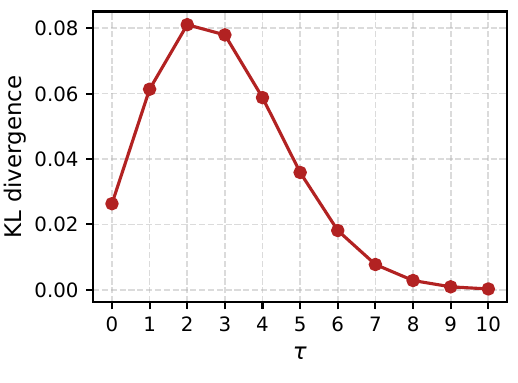}  & \includegraphics[width=0.46\linewidth]{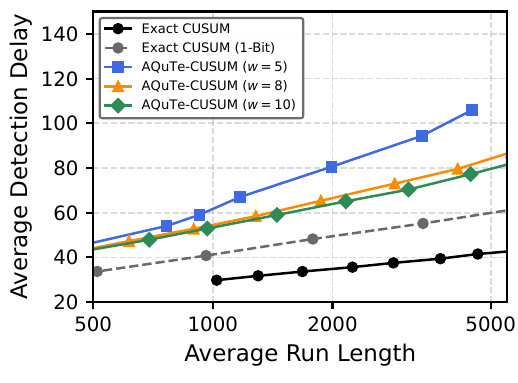}    
       \end{tabular}
       \vspace{-0.2in}
    \caption{$\mathsf{Poi}(\theta)$, with $\theta_0=2$, $\theta_1=2.75$, and $\Theta_1 = [2.275, \infty)$. 
    }
    \label{fig:poisson}
    \vspace{-0.12in}
\end{figure}
{Unlike the Gaussian case, here the optimization is 
over the integers in $[\tau_{\min},\tau_{\max}]$. 
Thus, in practice, 
$\tau^\star$} can be obtained by direct enumeration over all admissible integer cutoff values $\tau\in \{\lceil \tau_{\min}\rceil,\ldots,\lfloor \tau_{\max}\rfloor\}$.
{Fig.~\ref{fig:poisson} (left) illustrates this observation and Fig.~\ref{fig:poisson} (right) shows the average detection delay versus ARL, from which we can draw similar conclusions as for the Gaussian case.}


\section{Conclusion and Discussion}
{We studied 
quickest} change detection based on 1-bit quantized observations, where the post-change distribution follows a parametric model with unknown parameters, and the quantization thresholds are designed jointly with the detection statistic. We proposed an algorithm, \algname{}, and showed that it achieves the smallest worst-case average detection delay in first-order as the average run length goes to infinity.
As part of future work, we plan to consider: (i) the practical implementation of threshold selection. 
{The exact maximizer $\tau_t$ may not be available in closed-form  and hence, it would be interesting} 
to understand how to develop an efficient approximation algorithm and analyze the corresponding impact on detection delay. (ii) In practice, there might be {some
constraints,} such as $|\tau_{t} - \tau_{t-1}|\le \delta$ {for 
$\delta>0$} or $\tau_t$ can only be updated once for every $m$ time steps. 
Thus, it is left as future work to develop a modified detection algorithm and analyze its theoretical performance under 
these constraints.

\newpage
\balance
\bibliographystyle{IEEEtran}
\bibliography{references.bib}

\end{document}